\documentclass{llncs}
\usepackage{graphicx}
\usepackage{amsmath}
\usepackage{amssymb}
\usepackage{xcolor}

\newcommand{\Sets}{\mathcal{S}}
\newcommand{\IEntropy}{\mathcal{I}}
\newcommand{\DEntropy}{\Delta}
\newcommand{\U}{\mathcal{U}}
\newcommand{\T}{\mathcal{T}}

\newcommand{\anon}[1]{#1}

\title{Compressed Set Representations \\ based on Set Difference}
\anon{
\author{Travis Gagie\inst{1,3}, Meng He\inst{1}, Gonzalo Navarro\inst{2,3}}

\institute{
Dalhousie University, Canada,
\email{mhe@cs.dal.ca,Travis.Gagie@dal.ca} \and
Dept. of Computer Science, University of Chile, 
\email{gnavarro@dcc.uchile.cl} \and
Center for Biotechnology and Bioengineering (CeBiB)}
}

\date{}

\begin{document}

\maketitle

\begin{abstract}
We introduce a compressed representation of sets of sets that exploits how much they differ from each other. Our representation supports access, membership, predecessor and successor queries on the sets within logarithmic time. In addition, we give a new MST-based construction algorithm for the representation that outperforms standard ones.    
\end{abstract}

\section{Introduction}

The goal of compact data structures is to represent combinatorial objects in space close to their compressibility limit, so that the representation can efficiently answer a desired set of queries on the objects, without the need to decompress them~\cite{Nav16}. Given the incomputability of Kolmogorov's absolute notion of compressibility, the notion of compressibility limit varies depending on the application and on the kind of regularities one expects to exploit from the data.

In this paper we focus on representing a set $\Sets$ of finite sets, each drawn from a totally ordered universe $\U$ of elements, and what we aim to exploit is the fact that some sets may be similar to others, that is, their symmetric difference, $S \triangle S'$, may be small. In such a case, we can represent $S$ in terms of $S'$ in space proportional to $|S \triangle S'|$, by listing which elements we have to insert into or delete from $S'$ in order to obtain $S$ (or, symmetrically, we can represent $S'$ from $S$). 

Let $\DEntropy(\Sets)$ be the size of a representation of $\Sets$ based on encoding symmetric differences.  Our goal in this paper is to show that a representation of $\Sets$ in space $O(\DEntropy(\Sets))$ can provide efficient general access to the compressed data. Concretely, we focus on providing the following functionality:
\begin{description}
    \item[Membership:] Determine if some element $x \in \U$ belongs to some set $S \in \Sets$.
    \item[Access:] Obtain the $i$th smallest element of some set $S \in \Sets$, for any $1 \le i \le |S|$.
    \item[Rank:] Count the number of elements $\le x$ that exist in $S \in \Sets$, for some $x \in \U$.
    \item[Predecessor and successor:] Find the largest element $\le x$ and the smallest element $\ge x$ in a set $S \in \Sets$.
\end{description}

Those fundamental queries enable many operations on sets of sets, such as for example union and intersection of sets in $\Sets$. Our representation supports the fundamental queries in time $O(\log |\U|)$ or less, which is a low overhead anyway incurred in many cases, even with explicit (uncompressed) representations. 

Sets of sets arise in many applications.
For example, the rows of a Boolean matrix can be viewed as the characteristic vectors of sets, in which case our membership query corresponds to accessing individual matrix cells, while access (as well as predecessor/successor) corresponds to collecting the 1s in a row.  Since a graph can be represented as its Boolean adjacency matrix, we can also view it as a set of sets, in which case membership corresponds to asking for the existence of individual edges and access to traversing the neighbors of a node. Our structures then offer both dense and sparse representation functionality and can be used to run a variety of algorithms, such as matrix multiplication and graph traversals, while they are represented within space $O(\DEntropy(\Sets))$. This idea has a rich history of developments and applications, which we defer to Section~\ref{sec:motivation}.

A close predecessor of our work is the so-called ``containment entropy'' \cite{ABGNP25}, which represents sets as subsets of others: if $S \subset S'$, then they can represent $S$ by telling which elements of $S'$ it contains. They define a compressibility measure corresponding to the best such representation, and show that within that space they can support the five queries mentioned above on each set $S$ in time $O(\log(|\U|/|S|))$. This can be called ``deletion compressibility'': it specifies what to delete from $S'$ to obtain $S$. Our measure $\DEntropy(\Sets)$ is coarser, in the sense that it measures the number of elements and not the exact number of bits to represent them, but it is more general because it allows insertions and deletions.

We start by defining the simpler concept of ``insertion compressibility'', $\IEntropy(\Sets)$, where we describe $S'$ via the $|S' \setminus S|$ elements we must add to $S$ to obtain $S'$. This is the dual of deletion compressibility: in terms of number of elements, $|S' \setminus S|$, they differ only because the deletion compressibility starts from the set of all elements, $\U$, whereas the insertion compressibility starts from the set $\emptyset$. 

We then define the more powerful ``set-difference compressibility'', $\DEntropy(\Sets) \le \IEntropy(\Sets)$, as the size of the minimum arrangement where sets can be defined in terms of others by specifying insertions and deletions to make, starting from basic sets $\emptyset$ and $\U$. We solve the five basic queries within time $O(\log|\U|)$ and space $O(\DEntropy(\Sets))$.  

The five queries, both on the structures of space bounded by $O(\IEntropy(\Sets))$ and $O(\DEntropy(\Sets))$, are efficiently supported on top of the so-called ``tree extraction'' framework \cite{HMZ16,HMZ14}, relatively directly in the case of insertion compressibility, and requiring new ideas in the case of set-difference compressibility. In particular, we show how to access the $i$th smallest element among the ancestors of some node, when some nodes in the path can ``delete'' symbols that exist upward in the path. This solution can have independent interest.

As a byproduct, we give improved results for the MST-based construction of the compressed matrix representation of Alves et al.~\cite{AMB+25}, which multiplies it by a dense vector in time $O(\DEntropy(\Sets))$. This is a crucial aspect to make this representation usable, up to the point that previous works \cite{BL01} preferred to use approximate MSTs. We also show how to compute insertion compressibility, $\IEntropy(\Sets)$, which can be used directly to improve the time to compute the containment entropy \cite{ABGNP25}.

\section{Preliminaries and the Tree Extraction Framework} \label{sec:tree-extraction}

Throughout this paper, we adopt the word RAM model with $\omega$-bit words.
As we heavily use the tree extraction framework \cite{HMZ16,HMZ14} to obtain our results, we first present it in this section.
Tree extraction, a key process of this framework, works as follows:
Given a tree $T$ and a subset, $X$, of its nodes including the root,
we construct a new tree by deleting the nodes of $T$ that are not in
$X$. Whenever a node $v$ is deleted, its children are inserted in its
place as the children of $v$'s parent, preserving the original
left-to-right order among nodes.
Figure~\ref{fig:extraction} shows an example.

\begin{figure}[t]
\includegraphics[width=\textwidth]{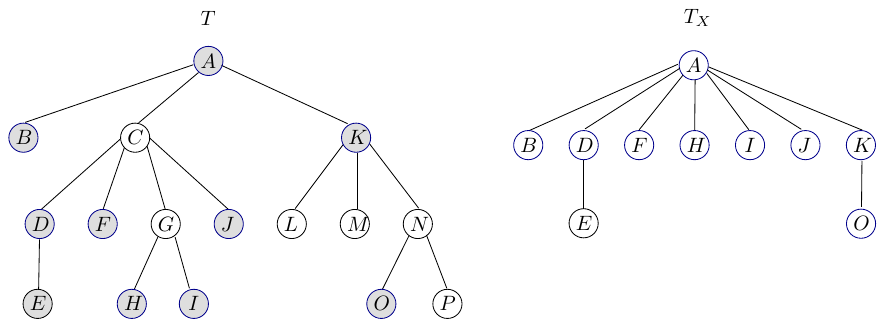}
\caption{An example of tree extraction, in which the original tree $T$ is shown on the
  left, $X=\{A, B, D, E, F, H, I, J, K, O\}$ is the set of shaded nodes, and the extracted tree $T_X$ is shown on the right.}
\label{fig:extraction}
\end{figure}

He, Munro and Zhou~\cite{HMZ14} use tree extraction to represent a labeled tree to support labeled navigational operations. Given an ordinal tree ${\cal T}$ with $|\T|$ nodes, each with a label from universe $\U = \{1,2,\ldots,u\}$, they define a subtree $\T_\alpha$ for each $\alpha \in \U$ by using on $\T$ the tree extraction process described in the previous paragraph, for the set $X_\alpha$ consisting of 
the nodes labeled $\alpha$, their parents and the root.
Thus, $\sum_{\alpha \in \U} |\T_\alpha| = O(|\T|)$. The nodes of the trees $\T_\alpha$ are marked with values $0$ or $1$, so that nodes marked $1$ correspond to nodes of $\T$ labeled $\alpha$. 
He et al.\ show how to compute, in $O(\log\log_\omega u)$ time, a mapping $f_\T(v,\alpha) = v'$, where $v$ is a node from $\T$ and $v'$ is a node from $T_\alpha$, so that the number of nodes labeled $\alpha$ from $v$ to the root of $\T$ equals the number of nodes labeled $1$ from $v'$ to the root of $\T_\alpha$.
Their representation is built in $O(|\T|\log u)$ time and
occupies $|\T|\lg u + O(|\T|) + o(|\T|)\cdot\lg u$ bits, or
$O(|\T|)$ words, of space. It supports these operations in time dominated by the cost to compute $f_\T$:
\begin{description}
    \item[$\mathtt{parent}_\alpha(v)$:] the lowest ancestor of $v$ labeled $\alpha$ (or $\perp$ if there is none).
    \item[$\mathtt{rank}_\alpha(v)$:] the number of ancestors of $v$ labeled $\alpha$.
    \item[$\mathtt{select}_\alpha(v,i):$] the $i$th highest ancestor of $v$ labeled $\alpha$.
\end{description}

In a later paper, He, Munro and Zhou~\cite{HMZ16} extend the concept of a {\em wavelet tree} \cite{GGV03,Nav14} to support more sophisticated operations called {\em path queries}. They partition the universe $\U = \U_{1,u}$ into subuniverses $\U_{a,b} = \{a,a+1,\ldots, b\}$ via successive halving: starting from $a=1$ and $b=u$, they partition $\U_{a,b}$ into $\U_{a,m}$ and $\U_{m+1,b}$, where $m= \lfloor (a+b)/2\rfloor$. They store trees $\T_{a,b}$ labeled with $0$ and $1$, where each node $v \in \T_{a,b}$ is labeled $0$ if the label of $v$ belongs to $\U_{a,m}$ and $1$ if it belongs to $\U_{m+1,b}$. Starting with $\T_{1,u} = \T$, they use tree extraction to define $\T_{a,m} = (\T_{a,b})_0$ and $\T_{m+1,b} = (\T_{a,b})_1$. The partition ends at the trees of the form $\T_\alpha = \T_{\alpha,\alpha}$, for $\alpha\in\U$. Because the universe is just \{0,1\}, they store each tree $\T_{a,b}$ using $O(|\T_{a,b}|)$ bits, so that the total hierarchical partition uses $O(|\T|\log u)$ bits, or $O(|\T|)$ space. It is built in time $O(|\T|\log u)$.
This arrangement can be used to answer the following queries:
\begin{description}
\item[$\mathit{Path~counting}$:] Given $v \in \T$ and labels $a \le b$, counts the number of ancestors of $v$ with label between $a$ and $b$.
    \item[$\mathit{Path~selection}$:] Given $v \in \T$ and an integer $i$, finds the ancestor of $v$ with the $i$th smallest label. 
\end{description}

This hierarchical partitioning of the universe enables $O(\log u)$ time support for path queries. He et al.\ then 
improve the query time to $O(\log_{\omega} u)$
by decomposing the universe into a sublogarithmic number of
subuniverses each time~\cite{HMZ16}.
 Their improved approach achieves succinct space simultaneously: when $u \le {\omega}^{\epsilon}$ for some
constant $\epsilon \in (0,1)$, the space cost is $|\T|(\lg u + 2) +o(|\T|)$
bits; otherwise, it uses $|\T|\lg u + O(\frac{|\T|\lg u}{\lg\lg |\T|})$ bits.

\section{Warm-up: Insertion Compressibility}
\label{sec:insertion}

We start with a weaker compressibility definition that only captures the fact that some subsets can be (large) subsets of others, and thus can be used to represent the
containing subset within little space. This definition, as explained, is the dual of the 
``containment entropy'' \cite{ABGNP25}, which focuses on representing the contained set in terms of the containing one.

\begin{definition} \label{def:insentropy}
Let $\Sets$ be a set of sets. For each $S \in \Sets$, let $p(S)$ be a largest strict subset of $S$ in  $\Sets$, or $\emptyset$ if no such subset exists. Then the {\em insertion compressibility} of $\Sets$ is 
$$ \IEntropy(\Sets) ~~=~~ \sum_{S \in \Sets} |S \setminus p(S)| ~~=~~ \sum_{S \in \Sets} |S| - |p(S)|.
$$
\end{definition}

To represent a set of sets $\Sets$ within $O(\IEntropy(\Sets))$ space, we define
a so-called ``insertion graph''.

\begin{definition}
Let $\Sets$ be a set of sets, with $p(S)$ according to Def.~\ref{def:insentropy}. The {\em insertion graph} of $\Sets$ is a weighted directed graph with nodes $\Sets \cup \{\emptyset\}$ and an edge from each set $S \in \Sets$ to $p(S)$, with weight $|S| - |p(S)|$.
\end{definition}

It is easy to see that the insertion graph is equivalent to a tree rooted at $\emptyset$, whose weights add up to $\IEntropy(\Sets)$; see the left of Figure~\ref{fig:ientropy}. By storing $S \setminus p(S)$ at each node $S$ of the graph, we spend $O(\IEntropy(\Sets))$ space and can reconstruct $\Sets$.

\begin{figure}[t]
\includegraphics[width=\textwidth]{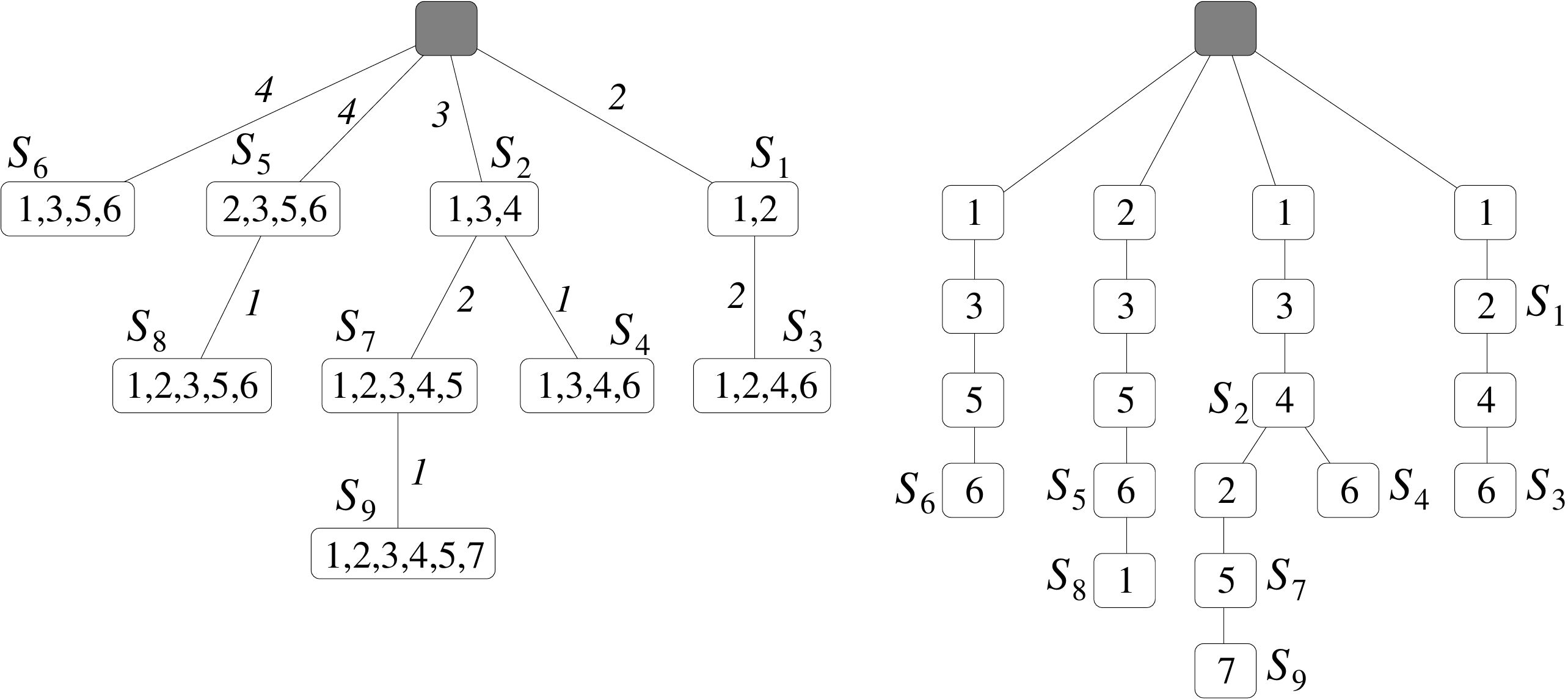}
\caption{On the left, an insertion graph for sets $\Sets = \{ S_1,\ldots,S_9\}$, with the weights written as slanted values on the edges. It holds $\IEntropy(\Sets)=20$. On the right, a corresponding insertion tree with 21 nodes. We write $S_i$ besides its corresponding node $v(S_i)$.}
\label{fig:ientropy}
\end{figure}

We will use a slightly different representation, in the form of a labeled tree, still of size $O(\IEntropy(\Sets))$, in order to efficiently support the five queries on $\Sets$.

\begin{definition}
Let $\Sets$ be a set of sets, with $p(S)$ according to Def.~\ref{def:insentropy}. The {\em insertion tree} of $\Sets$ is
defined as follows.
\begin{itemize}
    \item The root of the tree is the empty set, and is called $v(\emptyset)$.
    \item Every set $S \in \Sets$ is a node, which we call $v(S)$, the node associated with $S$ (but there may be other tree nodes that are not associated with any sets).
    \item Let $I(S) = S \setminus p(S)$. Then the tree has a chain of $|I(S)|$ edges from $v(p(S))$ to $v(S)$, each labeled with a distinct element of $I(S)$, in arbitrary order.
\end{itemize} 
\end{definition}

The insertion tree of $\Sets$ clearly has $1+\IEntropy(\Sets)$ nodes; see the right of Figure~\ref{fig:ientropy}. We represent it in $O(\IEntropy(\Sets))$ space using the tree extraction framework.
More precisely, we represent the insertion tree using the labeled tree
structure of \cite{HMZ14} to enable $O(\log\log_\omega|\U|)$-time
support of $\mathtt{parent}_\alpha$, $\mathtt{rank}_\alpha$ and
$\mathtt{select}_\alpha$, and we additionally represent it using the
labeled tree structure of \cite{HMZ16} to support path counting and
path selection queries in $O(\log_{\omega} |\U|)$ time.
These two structures use $2|\IEntropy(\Sets)|\lg u + O(|\IEntropy(\Sets)|) + o(|\IEntropy(\Sets)|)\cdot\lg u$
bits in total, which is within $O(\IEntropy(\Sets))$ words of space. 
We now describe how the basic operations can be supported on any set $S$.

\begin{description}
\item[Membership] To find out whether $x \in S$, we locate $v(S)$ (which we can directly associate with $S$ within the space budget). We then reduce the query to the primitive $\mathtt{parent}_x(v(S))$ \cite{HMZ14} (which finds the closest ancestor of $v(S)$ labeled $x$), returning true iff there is one. This takes time $O(\log\log_\omega|\U|)$.
\item[Access] To access the $i$th smallest element of $S$, we retrieve the $i$th smallest symbol on an edge in the 
path from $v(S)$ to the root. This corresponds to a path selection query in our insertion tree, requiring time $O(\log_{\omega} |\U|)$.
\item[Rank] A rank for $x$ in $S$ corresponds to counting the number of labels of ancestors of $v(S)$ that are at most $x$. This is solved with a path counting query with interval $[1,x]$ in $O(\log_{\omega} |\U|)$ time.
\item[Predecessor/Successor] To find the predecessor of $x$ in $S$ we compute the rank $i$ of $x$ in $S$. If $i=0$, then $x$ has no predecessor in $S$; otherwise we obtain it by accessing the $i$th smallest element of $S$. For successor, if the predecessor of $x$ is not $x$, we access and return instead the $(i+1)$st smallest
element of $S$, or return that there is no successor if $i$ is the depth of $v(S)$ in the insertion tree.
This entire process uses $O(\log_{\omega} |\U|)$ time.
\end{description}

The insertion tree is easily built from the insertion graph in time $O(\IEntropy(\Sets))$; the costly part is to find $p(S)$ for every $S \in \Sets$. Let $s = |\Sets|$, $n = \sum_{S \in \Sets} |S|$, and $u=|\U|$. We can first sort the sets by increasing size in time $O(s \log s)$, and insert them one by one as nodes in the insertion graph (which initially contains only the node $\emptyset$) with an edge towards $p(S)$. To insert $S$, we sort its elements (which induces total time $O(n\log u)$) and then traverse the current graph in DFS order from the node $\emptyset$, finding the largest sets $S' \subset S$. Checking such containments takes time $O(|S|+|S'|)$ with a merge-like algorithm. This process takes time $O(\sum_{S,S'} (|S|+|S'|)) = O(sn)$. If we store the sets $S\setminus p(S)$ along this same process, the insertion graph is built in additional time $O(\IEntropy(\Sets)) \subseteq O(n)$. From the insertion graph, we build the insertion tree in time $O(\IEntropy(\Sets))$ and the tree extraction data structure in additional time $O(n\log u)$.

Tree extraction assumes that $\U=[1.. u]$. If this is not the case, we collect all the $n$ elements, sort them in $O(n\log u)$ time, and assign to the $u$ distinct values integers in $[1..u]$. Queries and answers can be translated in constant time.

\begin{theorem} \label{thm:insertion}
On a word RAM with $\omega$-bit words, a set of $s$ sets $\Sets$, over a universe of size $u=|\cup_{S \in \Sets} S|$, can be represented within $O(\IEntropy(\Sets))$ space so that access, rank, predecessor and successor queries can be carried out in time $O(\log_\omega u)$ and membership in time $O(\log\log_\omega u)$. 
If $n=\sum_{S \in \Sets} |S|$, then the data structure can be built in time $O(n\log u + sn)$.
\end{theorem}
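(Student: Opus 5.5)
The theorem essentially collects the preceding discussion, so the plan is to assemble it from three pieces.

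\emph{Space and correctness of the structure.} The first step is to verify that the insertion tree of $\Sets$ encodes every set as a root-to-node path. The argument is by induction along the chain $S, p(S), p(p(S)),\ldots,\emptyset$: the labels on the path from $v(S)$ to the root are exactly the elements of $S$, each appearing once. This holds because the sets $I(S')=S'\setminus p(S')$ along the chain are pairwise disjoint, since each $p(S')\subset S'$, and their union is $S$. As a consequence, the tree has $1+\IEntropy(\Sets)$ nodes. Edge labels are moved onto the child endpoint of each edge, and the root is unlabeled or carries a dummy label. If the universe is not $[1..u]$, we first remap elements to $[1..u]$ by sorting. We then build both the structure of \cite{HMZ14} and the one of \cite{HMZ16} on this tree. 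Each occupies $O(|\T|)$ words, so the total is $O(\IEntropy(\Sets))$ words. We also store a pointer from each $S$ to $v(S)$, which costs $s\le\IEntropy(\Sets)$ words. Here we assume no set in $\Sets$ is empty, or that the empty set is handled trivially, so that every $S$ contributes at least one unit.

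\emph{Queries.} Given the path characterization, each query reduces directly to a primitive, exactly as described above.
\begin{itemize}
\item Membership of $x$ holds iff $\mathtt{parent}_x(v(S))\neq\perp$. This takes $O(\log\log_\omega u)$ time.
\item Access to the $i$th element is path selection from $v(S)$.
\item Rank of $x$ is path counting on the interval $[1,x]$.
\item Predecessor and successor are answered with one rank and one access, with boundary cases detected by comparing $i$ with $0$ and with $|S|$. Here $|S|$ is the depth of $v(S)$, which we can store explicitly.
\end{itemize}
These all cost $O(\log_\omega u)$ time by \cite{HMZ16}.

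\emph{Construction time.} This is the only step needing care, and I expect it to be the main obstacle, since computing $p(S)$ for all sets dominates the cost. First, sort the sets by size in $O(s\log s)$ time, which is within $O(sn)$, and sort each set's elements, giving $O(n\log u)$ in total. Then, for each $S$ in increasing order of size, test containment $S'\subset S$ against every previously inserted $S'$ with a linear merge costing $O(|S|+|S'|)$, and keep the largest contained set found. Processing in size order guarantees that every candidate strict subset has already been inserted, since equal-size sets cannot be strict subsets. Summing over pairs gives $\sum_S\sum_{S'}(|S|+|S'|)=O(sn)$. After this, storing each $I(S)$ and expanding the chains into the insertion tree costs $O(\IEntropy(\Sets))\subseteq O(n)$. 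Finally, both tree-extraction structures are built in $O(|\T|\log u)\subseteq O(n\log u)$ time. The total is $O(n\log u+sn)$, as claimed.
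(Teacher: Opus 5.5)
Your proposal is correct and follows essentially the same route as the paper: you build the insertion tree, store it with the two labeled-tree structures of He, Munro and Zhou, and reduce the queries to $\mathtt{parent}_x$, path selection and path counting; construction processes the sets in increasing size with merge-based containment tests, for $O(sn)$ total. The only difference is cosmetic: the paper scans the previously inserted sets by a DFS of the current insertion graph rather than a plain loop, with the same bound. Your explicit root-to-node path invariant and the check $s\log s\in O(sn)$ fill in details the paper leaves implicit.
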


In the next section we give, as a byproduct, an improved construction time; see Theorem~\ref{thm:Iconstr}.

\section{Symmetric-Difference Compressibility}

We now give our definition of symmetric-difference (or {\em symdiff} for short) compressibility, which allows expressing a set by both inserting in or removing elements from some other set. This is strictly stronger than both insertion and deletion compressibility.

\begin{definition} \label{def:setdifgraph}
A {\em symdiff graph} on a set of sets $\Sets$ is a weighted directed graph on the nodes $\Sets \cup \{ \emptyset, \U\}$, where $\U = \cup_{S \in \Sets} S$. There is exactly one edge leaving from each set $S \in \Sets$. The target node of that edge is called $p(S)$ and the weight of the edge is $|S \triangle p(S)|$, the size of the symmetric difference. 
Further, there is a path from every $S \in \Sets$ to $\emptyset$ or to $\U$. The {\em weight} of a symdiff graph is the sum of the weights of its edges.
\end{definition}

Note that a symdiff graph can be regarded as two rooted trees, one rooted at $\emptyset$ and the other rooted at $\U$; see the left of Figure~\ref{fig:dentropy}.
Given a symdiff graph for $\Sets$, we can represent at each node $S$ only those elements in $S \setminus p(S)$ and those in $p(S) \setminus S$, so that we can reconstruct $S$ from $p(S)$. The total space incurred is the weight of the two trees. This is a slight variation over the model of Alves et al. \cite{AMB+25}, which use only one tree rooted at $\emptyset$.

\begin{figure}[t]
\includegraphics[width=\textwidth]{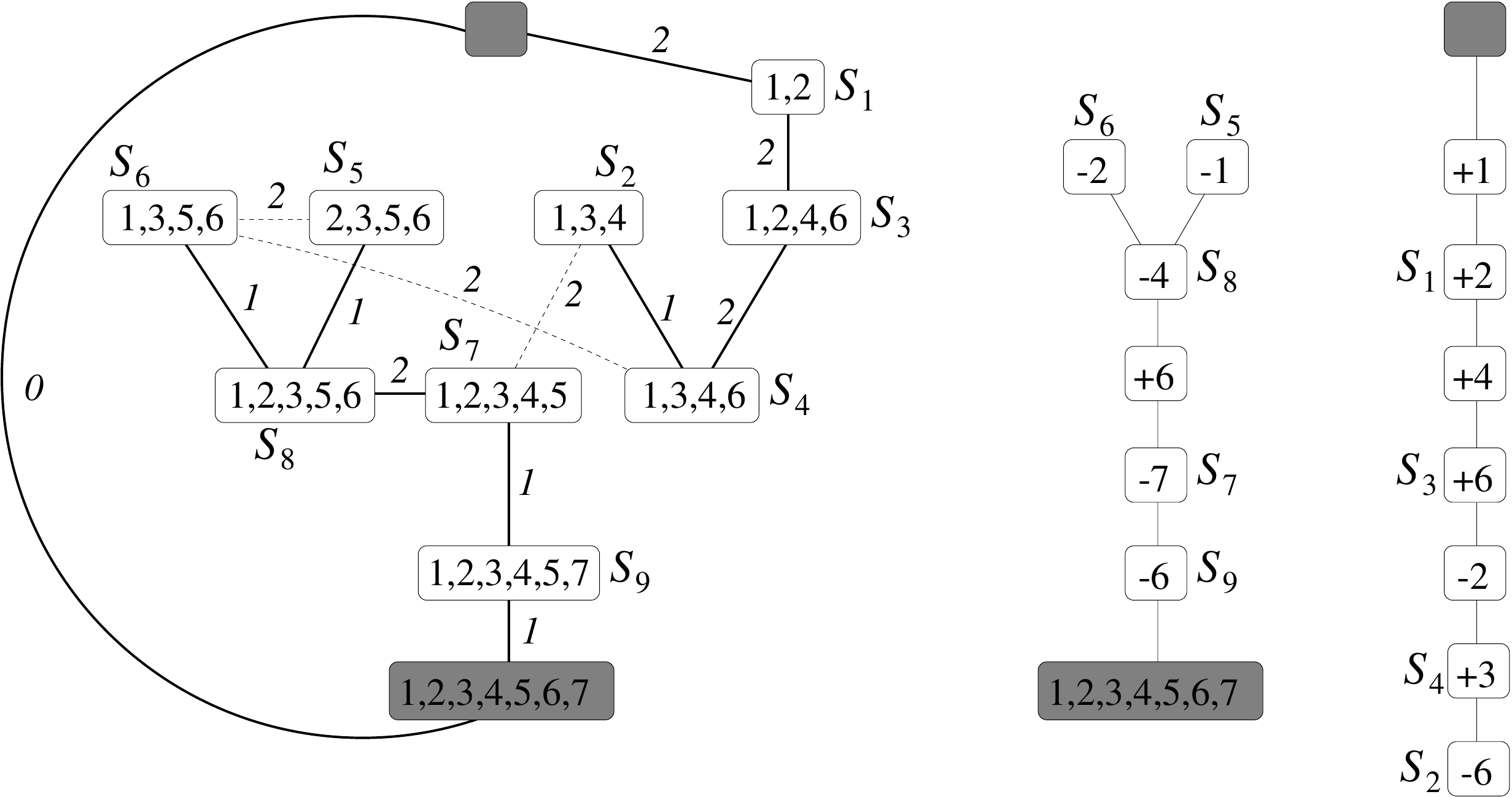}
\caption{On the left, a symdiff graph of minimum weight between the same sets of Figure~\ref{fig:ientropy}, yielding $\DEntropy(\Sets) = 13$. To build it, it is sufficient to consider the full graph edges with weights up to $\ell=2$. Among these edges, those not belonging to the MST (i.e., the symdiff graph of minimum weight) are dashed. On the right, the indel trees, rooted at $\U$ (upside down) and at $\emptyset$, that represent $\Sets$ in space $O(\DEntropy(\Sets))$. We write $+x$ and $-x$ to indicate elements $x$ to add or to delete, respectively, from the parent node.}
\label{fig:dentropy}
\end{figure}

\begin{definition}
The {\em symdiff compressibility} of a set of sets $\Sets$, $\DEntropy(\Sets)$, is the minimum weight of a symdiff graph on $\Sets$.    
\end{definition}

We now describe a representation that uses $O(\DEntropy(\Sets))$ space and supports the five basic queries in the almost same times as with insertion compressibility.

As explained, we will represent the two trees, storing at each node the symmetric difference with its parent node. We will later modify this tree and add more structures in order to support efficient queries on the sets. 

\subsection{Construction}

Analogously to what Alves et al.~\cite{AMB+25} do on a slightly different model,
we can build the lightest symdiff graph as the minimum spanning tree (MST) over the full graph with nodes $S \in \Sets \cup \{\emptyset,\U\}$ and edge weights $w(S,S')=|S \triangle S'|$. The only exception is that, in order to obtain the best pair of trees according to our model, we set $w(\emptyset,\U)=0$. We first show the correctness of this construction.

\begin{lemma}
The weight of the described MST is $\DEntropy(\Sets)$ and the two trees that achieve it are obtained by removing the edge between $\emptyset$ and $\U$.
\end{lemma}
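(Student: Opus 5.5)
I will prove the statement by exhibiting a weight-preserving correspondence between symdiff graphs on $\Sets$ and spanning trees of the full graph $G$ on $\Sets \cup \{\emptyset,\U\}$ that contain the edge $(\emptyset,\U)$. I then show that some MST of $G$ contains that edge. Throughout, $G$ has edge weights $w(S,S')=|S \triangle S'|$, except that $w(\emptyset,\U)=0$.

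\textbf{Symdiff graphs give spanning trees.} Take a symdiff graph. By Definition~\ref{def:setdifgraph}, every $S\in\Sets$ has exactly one outgoing edge, and it has a path to $\emptyset$ or to $\U$. Neither $\emptyset$ nor $\U$ has outgoing edges, so following the out-edges from any $S$ cannot cycle. Hence the graph is a forest of two in-trees, rooted at $\emptyset$ and at $\U$. It has $|\Sets|$ edges on $|\Sets|+2$ nodes. Viewed undirected and with the edge $(\emptyset,\U)$ added, it becomes a spanning tree of $G$ with the same total weight, since the added edge weighs $0$. Thus $\DEntropy(\Sets)$ is at least the weight of an MST of $G$.

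\textbf{Spanning trees containing $(\emptyset,\U)$ give symdiff graphs.} Let $T$ be a spanning tree of $G$ that contains $(\emptyset,\U)$. Removing that edge leaves two components, one containing $\emptyset$ and one containing $\U$. Orient every edge toward the root of its component and set $p(S)$ to be the parent of $S$. Each $S\in\Sets$ then has exactly one outgoing edge and a path to $\emptyset$ or to $\U$. Each edge weight is $|S\triangle p(S)|$, because the only modified weight in $G$ is the removed edge. So this is a symdiff graph of weight $w(T)$.

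\textbf{Some MST contains $(\emptyset,\U)$.} This is the one slightly delicate point. Weights are nonnegative, so $(\emptyset,\U)$ has minimum weight in $G$ and Kruskal's algorithm may process it first. Concretely, if an MST $T$ omits it, add it to $T$, which creates a cycle, and drop any other edge of that cycle. The weight does not increase, so the result is an MST containing $(\emptyset,\U)$.

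Combining the three steps, $\DEntropy(\Sets)$ equals the MST weight. Removing $(\emptyset,\U)$ from such an MST gives the two optimal trees. Every MST that contains the edge yields an optimal pair this way.

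A minor point: ties such as $S=\emptyset$ or a repeated set do not matter, because a weight-$0$ edge behaves the same in both constructions.
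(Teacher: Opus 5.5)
Your proof is correct and takes essentially the same approach as the paper: a weight-preserving correspondence between pairs of trees rooted at $\emptyset$ and $\U$ and spanning trees that contain the zero-weight edge $(\emptyset,\U)$. The differences are small. You use an exchange argument to show that \emph{some} MST contains that edge, whereas the paper asserts that \emph{every} MST contains the lightest edge; the paper's claim holds here because distinct nodes are distinct sets, so when $\U\neq\emptyset$ every other edge has weight at least $1$. You also explicitly justify that a symdiff graph is a forest of two trees, which the paper takes for granted.
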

\begin{proof}
Every MST contains the lightest edge, which is the one between $\emptyset$ and $\U$. Removing this edge disconnects the MST into two trees, one rooted at $\emptyset$ and one rooted at $\U$. The sum of the weights of the two trees is the same of the MST, as we removed a zero-cost edge. On the other hand, any pair of trees rooted at $\emptyset$ and $\U$ can be turned into a spanning tree of the same cost by adding the zero-cost edge between $\emptyset$ and $\U$. It follows that the cost of the MST is $\DEntropy(\Sets)$. \qed
\end{proof}

We can build the MST in time $O(s^2)$ once all the edge weights are computed as in Section~\ref{sec:insertion}: increasingly sort the elements of all sets in time $O(n\log u)$, and compute all the edge weights in time $O(sn)$. The total construction time is then $O(n\log u + sn)$ as for insertion compressibility. This is only slightly better than the complexity obtained by Alves et al.~\cite{AMB+25}: $O(sn+s^2\log s)$ if the sets come already sorted. 
We now show that a more refined construction time is possible. 

We first show that, once the sets are sorted, one can compute $S \triangle S'$ in time $O(|S \triangle S'|)$ after an $O(n \log u)$-time preprocessing of $\Sets$. We map the elements of $\U$ to the interval $[1 .. u]$ in time $O(n\log u)$ if necessary, as before. Let $S_1,\ldots,S_s$ be the mapped sets. We then build a {\em text} $T(\Sets) = S_1\$_1S_2\$_2\cdots S_s\$_s$, which is a string where all the mapped sets $S_i$ are appended a unique terminator symbol $\$_i = u+i$ and concatenated. Since $|T(\Sets)| = n+s = O(n)$, we can build the suffix tree \cite{Apo85,CR02} of $T(\Sets)$ in
$O(n)$ time \cite{FFM00}, with support for constant-time level ancestor queries \cite{BF04}: $lca(u,v)$ is the deepest ancestor of suffix tree leaves $u$ and $v$, and its (stored) string depth is the length of the longest common prefix of the suffixes denoted by those leaves. By storing the suffix tree leaf corresponding to each suffix $T(\Sets)[i..]$, and the position where each string $S_i \in \Sets$ starts in $T(\Sets)$ (all of which takes $O(n)$ space) we can, in constant time, compute $lcp(S[p..],S'[q..])$, that is, how many equal symbols follow from any $S[p..]$ and any $S'[q..]$.

With this tool we can compare $S$ and $S'$ as follows. We start from $p,q:=1$ and find the longest common prefix $lcp(S[p..],S'[q..])=\ell$ in constant time. This means that the first position where the strings differ is $S[p+\ell] \neq S'[q+\ell]$. The smaller of both symbols is the first element of $S \triangle S'$. If the smaller element is $S[p+\ell]$ we set $p:=p+\ell+1$ and $q:=q+\ell$; if it is $S'[q+\ell]$ we set $p:=p+\ell$ and $q:=q+\ell+1$. We then continue extracting the elements of $S \triangle S'$ one by one. When both $p$ and $q$ reach terminators $\$_*$, we are done. Note that we can use this technique in iterator mode, finding the next difference at each call, so that we determine in time $O(k)$ whether the symmetric difference exceeds $k$.

\begin{lemma} \label{lem:iter}
After $k$ calls to this procedure on $S$ and $S'$, if both $p$ and $q$ point to terminators $\$_*$, then $|S \Delta S'| < k$. Otherwise, there are $k$ elements of $S \Delta S'$ listed in $S[..p-1]$ and $S'[..q-1]$, so $|S \Delta S'| \ge k$. 
\end{lemma}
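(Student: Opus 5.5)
We argue by an invariant on the iterator state: after $j$ calls that each produced an element, exactly $j$ distinct elements of $S \triangle S'$ have been reported. These are precisely the elements of $S \triangle S'$ occurring in the prefixes $S[..p-1]$ and $S'[..q-1]$. Moreover, the two prefixes cover all elements of $S \cup S'$ that are smaller than both $S[p]$ and $S'[q]$, where a terminator is treated as $+\infty$. The invariant is proved by induction on $j$. Initially $p=q=1$, the prefixes are empty, and nothing has been reported.

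For the inductive step, suppose the invariant holds and neither pointer is at a terminator. Since $lcp(S[p..],S'[q..])=\ell$, the next $\ell$ symbols agree. Each of them lies in $S\cap S'$, so none is in $S\triangle S'$. Now consider $S[p+\ell]\neq S'[q+\ell]$; both strings are strictly increasing, and terminators are larger than every element and distinct from each other. Suppose $S[p+\ell]$ is the smaller. Then it cannot occur in $S'$: it does not lie in the prefix $S'[..q+\ell-1]$, because that prefix matched $S[..p+\ell-1]$ and $S$ has distinct elements. It also cannot occur at position $q+\ell$ or later, since $S'$ is increasing from $S'[q+\ell]>S[p+\ell]$. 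Hence $S[p+\ell]\in S\setminus S'$, and it is a new element. The update $p:=p+\ell+1$, $q:=q+\ell$ restores the invariant. The symmetric case is identical. The one subtle point is that a terminator $\$_i$ never equals a universe symbol and $\$_i\neq\$_j$ for $S\neq S'$. Consequently the lcp stops at the first terminator, and the terminator counts as larger than any real element. So when one set is exhausted, every remaining element of the other is correctly reported as a difference. (If $S=S'$ then $S\triangle S'=\emptyset$ and the claim is trivial.)

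Both claims of the lemma now follow. If after $k$ calls both $p$ and $q$ point to terminators, then the prefixes cover all of $S$ and $S'$. By the invariant the reported elements are all of $S\triangle S'$, and there are at most $k$ of them. To obtain the strict bound $<k$, we must pin down how a call behaves when it finds no further difference, which I take to be the convention intended. Under this convention the final call reaches both terminators without emitting an element, so the call count exceeds the difference count by one. Otherwise, at least one pointer is not at a terminator, so all $k$ calls emitted an element. The invariant then places $k$ distinct elements of $S\triangle S'$ in $S[..p-1]$ and $S'[..q-1]$, which gives $|S\triangle S'|\ge k$.

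I expect the main obstacle to be this off-by-one accounting of the terminating call, that is, fixing precisely when the procedure detects both terminators. The inductive step itself is routine once sortedness and the distinctness of the terminators are used.
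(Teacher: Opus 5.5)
Your proof is the paper's argument. It is an induction on the number of calls: each emitting call adds one new element of $S\triangle S'$ to $S[..p-1]$ and $S'[..q-1]$ (your distinctness argument spells out the paper's appeal to sortedness), while the call whose mismatch symbols $S[p+\ell]$, $S'[q+\ell]$ are both terminators emits nothing, giving $|S\triangle S'|=k-1<k$. That convention for the terminating call is the one the paper's proof also adopts without stating it.

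Your off-by-one concern is justified: the convention must mean that termination is registered only by such a non-emitting call. With the update rules as written, an emitting call can already leave both pointers on terminators; for example, with $S=\{3,7\}$ and $S'=\{3\}$, the first call does so while $|S\triangle S'|=1$, and there the literal first claim of the lemma fails.
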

\begin{proof}
This holds after $k=0$ calls because $p=q=1$ and it cannot be $S=S'=\emptyset$. Now assume the proposition holds after $k-1$ calls. There is a $k$th call only if not both $p$ and $q$ point to terminators $\$_*$. The $k$th call then finds $\ell = lcp(S[p..],S'[q..])$, meaning there is no element of $S \Delta S'$ in $S[p..p+\ell-1]$ or $S[q..q+\ell-1]$, but $S[p+\ell] \neq S'[q+\ell]$. Because the elements of $S$ and $S'$ are increasingly sorted, this means that the smallest of $S[p+\ell]$ and $S'[q+\ell]$ is in $S \Delta S'$, or both are terminators $\$_*$. Assume $S[p+\ell] \in S \Delta S'$; the case $S'[q+\ell] \in S \Delta S'$ is analogous. The algorithm sets $p := p+\ell+1$ and $q := q+\ell$, which restablishes the property because now there are $(k-1)+1$ elements of $S \Delta S'$ in $S[..p-1]$ and $s'[..q-1]$. If, on the other hand, both $S[p+\ell]$ and $S'[q+\ell]$ are terminators $\$_*$, this means there are no more elements in $S \Delta S'$ and therefore $|S \Delta S'|=k-1<k$. \qed
\end{proof}

We adapt Prim's 
algorithm, which grows a set $V$ (the nodes already attached to the MST), by taking at each step a node out of $V'$ (the nodes not yet in the MST) and moving it into $V$. Initially we set $V := \{ \emptyset, \U\}$, with the zero-cost edge connecting them in the MST, and $V' = \Sets$. Unlike the classic algorithm, which knows all the weights from the beginning, we will start with all edge weights set to $+\infty$ for Prim's algorithm (except $w(\emptyset,\U)=0$, which is already in the MST), and will discover the true weights incrementally, from smallest to largest. 

We will maintain all the edges whose weights are yet unknown in a {\em bag} (initially of size ${s+2 \choose 2} - 1$). For each edge $(S,S')$ in the bag, we initialize its iterator at $p,q := 1$, which completes the iteration $k=0$.

Now we start the iterations $k=1$ onwards.
In the $k$th iteration, we advance the iterators once in all the edges $(S,S')$ of our bag. For those edges where both $p$ and $q$ end up pointing to terminators $\$_*$, we know that the edge weight is $w(S,S')=k-1$, because by Lemma~\ref{lem:iter} it was $\ge k-1$ and it is $<k$. So we set the corresponding weight for Prim's algorithm and remove the edge from the bag.

The invariant is that after the $k$th iteration we have defined all the edge weights that are less than $k$ (note that those edges may or may not connect nodes from $V$ and $V'$, so not all of them are immediately useful for Prim). We can then run some steps of Prim, until the next lowest weight to include in the MST is $+\infty$. At that point we suspend Prim's algorithm and move on to the next iteration, the $(k+1)$th, where we find all the weights equal to $k$. The next lemma proves the correctness of this approach.

\begin{lemma}
Prim algorithm runs correctly by using only the edges of weight up to $k$ before using any heavier edge.
\end{lemma}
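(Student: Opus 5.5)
The plan is to compare the suspended run against an imaginary execution of Prim's algorithm that knows every true weight from the start. Prim's correctness depends on one thing only: at each step it adds a minimum-weight edge crossing the cut $(V,V')$. So it suffices to show that every edge our procedure adds is a true minimum crossing edge. Any MST is then obtained, since ties can be broken arbitrarily.

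The core is an invariant. After the $k$th iteration of the bag procedure, an edge carries a finite weight exactly when its true weight is less than $k$, and that finite weight equals the true one. Every other edge is still $+\infty$ in Prim's priority structure, but its true weight is at least $k$. Both facts come directly from Lemma~\ref{lem:iter}. An edge is removed from the bag with weight $k-1$ only when its iterator hits both terminators at call $k$. Any edge still in the bag after $k$ calls has, by that lemma, at least $k$ listed differences.

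Now consider a Prim step taken after iteration $k$. Suppose the lightest crossing edge in the structure has finite weight $w<k$. Every crossing edge with unknown weight has true weight at least $k>w$. Every crossing edge with known weight has true weight at least $w$ by the choice of minimum. So the chosen edge is a true minimum crossing edge, as required.

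Otherwise the lightest crossing edge has weight $+\infty$, and we suspend Prim. Then no crossing edge has true weight below $k$. After iteration $k+1$ reveals all edges of weight exactly $k$, the same argument applies with $k+1$ in place of $k$. Note that edges discovered earlier but not crossing at discovery time are stored with their true weights, so Prim sees them once they become crossing edges. By induction on the number of Prim steps, every added edge is a light edge across the cut. Hence the final tree is an MST, and edges of weight at most $k$ are exhausted before any heavier edge is used.

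The main obstacle is bookkeeping rather than mathematics. We must check that suspending and resuming Prim does not invalidate its priority structure. Inserting newly discovered edge weights amounts to decrease-key operations from $+\infty$. These leave the structure's minimum-crossing-edge semantics intact, so the standard cut-property argument carries over unchanged. The procedure also terminates, because all weights are at most $u$ and the graph is complete and hence connected.
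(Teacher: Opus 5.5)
Your proof is correct and follows essentially the same route as the paper: split the edges into the already-discovered ones (true weight $<k$) and the undiscovered ones (true weight $\ge k$), and observe that whenever a discovered crossing edge exists, the lightest one is a true minimum crossing edge, so Prim's choice is unchanged. You additionally make explicit what the paper leaves implicit, namely the invariant derived from Lemma~\ref{lem:iter}, the induction across suspensions and resumptions, and the priority-structure bookkeeping.
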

\begin{proof}
Let $E = S_k \cup L_k$ be the graph edges, with $S_k = \{ e \in E,~w(e) < k\}$ being the edges with weight less than $k$. At any point in Prim's algorithm, it chooses the edge $e_{\min} = \mathrm{arg min} \{ w(e),~ e \in E \cap (V \times V') \}$, that is, the lightest one connecting $V$ and $V'$, and includes $e_{\min}$ in the MST (and one of its extremes in $V$). This can be rewritten as $e_{\min} = \mathrm{arg min} \{ w(e),~ e \in (S_k \cup L_k) \cap (V \times V') \}$. Since $w(e) < k \le w(e')$ for all $e \in S_k$ and $e' \in L_k$, it follows that $e_{\min} = \mathrm{arg min} \{ w(e),~ e \in S_k \cap (V \times V') \}$ if $S_k \cap (V \times V') \neq \emptyset$. It is therefore correct to run Prim only on the edges $S_k$ discovered up to iteration $k$ as long as it finds candidate edges of weight less than $k$. \qed
\end{proof}

For the analysis, note that, if $\ell$ is the heaviest weight included in the MST, then we have incremented each weight $w(S,S')$, along the algorithm,  $\min(\ell,|S \triangle S'|)$ times. The total time is then $O(\sum_{S,S'} \min(\ell,|S \triangle S'|))$, which is bounded both by $O(s^2\ell)$ and by $O(sn)$ (the latter because $|S \triangle S'| \le |S|+|S'|$).

\begin{theorem} \label{thm:constr}
A set of sets $\Sets$ can be represented within $O(\DEntropy(\Sets))$ space. If $\Sets$ has $s$ sets, the sum of the sizes of its sets is $n$, they contain $u$ distinct elements in total, and the maximum weight in a minimum-weight symdiff graph of $\Sets$ is $\ell$, then that graph can be built in time $O(n\log u + \sum_{S,S' \in \Sets} \min (\ell,|S \triangle S'|)) \subseteq O(n\log u + \min(s^2 \ell,sn))$.
\end{theorem}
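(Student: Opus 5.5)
The plan is to assemble the theorem from the pieces already established; the only real work is the time accounting for the incremental Prim procedure. For the space claim, I take the minimum-weight symdiff graph, which by the MST lemma consists of two trees rooted at $\emptyset$ and $\U$ whose total weight is $\DEntropy(\Sets)$. At each node $S$ I store the lists $S\setminus p(S)$ and $p(S)\setminus S$, which together have $|S\triangle p(S)|$ elements. The total space is therefore $O(s+\DEntropy(\Sets))$. Each $S$ costs at least one word, so I must argue $s=O(\DEntropy(\Sets))$. Since the sets in $\Sets$ are distinct, each edge $S\to p(S)$ has weight at least $1$, unless $S$ equals $\emptyset$ or $\U$; such a set can be merged with the corresponding root at no cost. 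Hence $s\le\DEntropy(\Sets)+2$.

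For construction time I proceed in three steps.
\begin{enumerate}
\item Map the universe to $[1..u]$ and sort every set, in $O(n\log u)$ time.
\item Build $T(\Sets)$ and its suffix tree with constant-time $lca$ support in $O(n)$ time. This makes every call of the symmetric-difference iterator cost $O(1)$.
\item Run the iterations $k=1,2,\ldots$. Iteration $k$ advances the iterator of every edge still in the bag once. By Lemma~\ref{lem:iter}, an edge leaves the bag exactly at iteration $|S\triangle S'|+1$ with its correct weight. So after iteration $k$, all weights smaller than $k$ are known.
\end{enumerate}
Between iterations I resume Prim. By the preceding lemma, Prim's choices made using only weights below $k$ coincide with those of the true algorithm, so the resulting tree is an MST and, by the MST lemma, an optimal symdiff graph.

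The main obstacle is the accounting. I need to show that the iterations stop by $k=\ell+1$, where $\ell$ is the heaviest MST edge weight. Once Prim has added every node, the process halts, and no edge needs to be advanced beyond $\min(\ell+1,|S\triangle S'|+1)$ steps. Charging $O(1)$ per advance gives $\sum_{S,S'}\min(\ell,|S\triangle S'|)$, up to an additive $O(s^2)$ term.

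The $O(s^2)$ term has to be absorbed. If $\ell\ge1$ it is dominated by $s^2\ell$. When comparing against $sn$, I use $\min(\ell,|S\triangle S'|)\ge1$ for distinct sets, so the sum itself is at least of order $s^2$ and absorbs the extra term.

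The bookkeeping for Prim also needs care. I would use a bucket array indexed by weight, or Prim with an array of size $s$ that is updated as edge weights become known. Each discovered edge then costs $O(1)$, and each extraction of the minimum costs $O(s)$, for $O(s^2)$ in total, which is already absorbed as above.

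Finally, $|S\triangle S'|\le|S|+|S'|$ gives $\sum\min(\cdot)\le\sum_{S,S'}(|S|+|S'|)=O(sn)$, and trivially the sum is at most $s^2\ell$. Together these yield the stated $\min(s^2\ell,sn)$ bound.
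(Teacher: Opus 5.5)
Your proposal is correct and follows the paper's own argument. It uses the MST characterization with $w(\emptyset,\U)=0$, the suffix-tree-based iterator of Lemma~\ref{lem:iter}, Prim run on edge weights discovered in increasing order, and a charge of $\min(\ell,|S\triangle S'|)$ advances per edge, bounded by $s^2\ell$ and by $\sum_{S,S'}(|S|+|S'|)=O(sn)$. You are somewhat more careful than the paper in bounding the per-set overhead via $s\le\DEntropy(\Sets)+2$ and in absorbing the additive $O(s^2)$ cost; the step you flag but leave implicit, that the loop ends by iteration $\ell+1$, holds because every MST edge has weight at most $\ell$, so once those weights are known Prim never stalls.
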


\paragraph{Improving the computation of $\IEntropy(\Sets)$.}

A similar approach can speed up the construction of the insertion graph of Section~\ref{sec:insertion}. For every new set $S$, instead of fully computing the distance towards all the preceding sets $S'$, we create a bag with those and use the same iterations from $k=0$ until finding the
first value $k-1$ associated with some set $S'$ such that $S' \subset S$ and $|S \setminus S'| = k-1$. We then set $p(S) := S'$. The computation using the pointers $p$ and $q$ is a bit different, because whenever we find the distinct element to occur in $S'$ we must discard $S'$ because $S' \not\subset S$. Overall, we spend time $O(s \cdot |S \setminus p(S)|)$ to add $S$ to the tree.

\begin{theorem} \label{thm:Iconstr}
The data structure of Theorem~\ref{thm:insertion} can be built in time $O(n\log u + s \cdot \IEntropy(\Sets)) \subseteq O(n\log u + sn)$.
\end{theorem}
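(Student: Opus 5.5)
We prove Theorem~\ref{thm:Iconstr} by keeping the framework of Theorem~\ref{thm:insertion} and changing only how each $p(S)$ is found. Everything else keeps its cost. Mapping $\U$ to $[1..u]$ and sorting every set costs $O(n\log u)$, and sorting the sets by size costs $O(s\log s)\subseteq O(n\log u)$ (we may assume every set is nonempty, so $s \le n$). Building the insertion tree from the insertion graph costs $O(\IEntropy(\Sets))$, and building the two tree-extraction structures costs $O(n\log u)$. We also build, once, the suffix tree of $T(\Sets)$ with constant-time $lca$, in $O(n)$ time, exactly as in the construction of Theorem~\ref{thm:constr}. It therefore suffices to show that $p(S)$ and $S\setminus p(S)$ can be computed in $O(s\cdot|S\setminus p(S)|+s)$ time per set. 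Summed over all sets this gives $O(s\cdot\IEntropy(\Sets)+s^2)$, and $s^2 \le sn$ while $\IEntropy(\Sets)\le n$, so the total fits within the claimed bound (the $s^2$ term can also be charged to the sort by size if needed).

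To find $p(S)$, we process the sets by increasing size. For the current $S$ we place in a bag every earlier set $S'$ with $|S'|<|S|$, together with $\emptyset$, and give each an iterator $p,q:=1$. We use a one-sided version of the iterator of Lemma~\ref{lem:iter}. Each call computes $\ell=lcp(S[p..],S'[q..])$ in constant time and compares $S[p+\ell]$ with $S'[q+\ell]$. If the smaller symbol comes from $S'$, or if $S$ has reached its terminator while $S'$ has not, then some element of $S'$ is missing from $S$. We discard $S'$, since $S'\not\subset S$. If the smaller symbol is in $S$, we advance $p$ past it and record it as an element of $S\setminus S'$. If $S'$ has reached its terminator, all remaining symbols of $S$ belong to $S\setminus S'$.

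The invariant, proved by the same induction as Lemma~\ref{lem:iter}, is as follows. After $k$ rounds, every surviving $S'$ has at least $k$ elements of $S\setminus S'$ listed before position $p$, and no element of $S'$ seen so far is missing from $S$. We stop a candidate $S'$ successfully when $q$ reaches its terminator, so $S'\subseteq S$, and the number of elements of $S$ not yet scanned is such that $|S\setminus S'|=k-1$. The one step that needs care is this: when $S'$ is exhausted we must not spend time scanning the rest of $S$. Because we know $|S|$ and $|S'|$, as soon as $q$ hits the terminator we learn $|S\setminus S'|=|S|-|S'|$ exactly. We then schedule $S'$ as a candidate with that known weight and never advance it again. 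Since $|S|>|S'|$, the weight is at least $1$, which keeps $S'$ a strict subset. The rounds stop at the first $k$ for which some candidate has known weight $k-1$. That candidate minimizes $|S\setminus S'|$ among subsets, which is the same as maximizing $|S'|$, so it is a valid $p(S)$.

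The main obstacle is bounding the rounds. We run $k=|S\setminus p(S)|+1$ rounds, and each round advances every bag element at most once, in constant time. A candidate with known weight is not advanced; it is bucketed by weight, and we check the bucket for value $k-1$ at each round. So the per-set cost is $O(s\cdot(|S\setminus p(S)|+1))$. Finally, we recover $S\setminus p(S)$ itself by rerunning the iterator on the pair $(S,p(S))$ in $O(|S\setminus p(S)|)$ time.
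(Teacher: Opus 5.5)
Your proposal follows the paper's proof. You process the sets by increasing size and put the earlier sets in a bag. You advance the $lcp$-based iterator of Lemma~\ref{lem:iter} once per round on every surviving candidate, and discard $S'$ as soon as an element of $S'\setminus S$ shows up. You stop at the first round $k$ that reveals a subset with $|S\setminus S'|=k-1$, which costs $O(s\cdot|S\setminus p(S)|)$ time per set. Your shortcut for an exhausted $S'$ (reading off $|S|-|S'|$) is harmless but unnecessary. Every candidate is advanced at most once per round, and there are at most $|S\setminus p(S)|+1$ rounds, so the paper simply lets the iterator run through the rest of $S$.

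The one step that fails as written is the final accounting. You arrive at $O(n\log u+s\cdot\IEntropy(\Sets)+s^2)$ and justify the $s^2$ term by $s^2\le sn$. That only gives the weaker bound $O(n\log u+sn)$, not $O(n\log u+s\cdot\IEntropy(\Sets))$. Charging $s^2$ to the sort by size is also not valid, since that sort costs only $O(s\log s)$, or $O(n)$ by bucketing the sizes.

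The fix is a fact you already observed: every nonempty $S\in\Sets$ has $|S\setminus p(S)|\ge 1$, and at most one member of $\Sets$ is empty. Hence $s\cdot(|S\setminus p(S)|+1)\le 2s\cdot|S\setminus p(S)|$ for each nonempty $S$. Moreover $\IEntropy(\Sets)\ge s-1$, so $s^2\le s\cdot\IEntropy(\Sets)+s$ with $s\le n+1$. The same inequality also absorbs the $O(s\log s)$ sorting term, for which $s\le n$ alone is not a sufficient justification.
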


By symmetry, this construction can be used to compute the containment entropy \cite{ABGNP25}, which is based on deletions instead of insertions, and has $\U$ instead of $\emptyset$ at the root. Their original construction takes time $O(sn\log n)$ \cite{ABGNP25}.

\subsection{Supporting the operations}

We will use the following data structure to represent $\Sets$ in space $O(\DEntropy(\Sets))$. See the right part of Figure~\ref{fig:dentropy}.

\begin{definition}
Let $\Sets$ be a set of sets, with $p(S)$ according to Def.~\ref{def:setdifgraph}. The {\em indel trees} of $\Sets$ are two trees defined as follows.
\begin{itemize}
    \item The root of one tree, called $v(\emptyset)$, represents the empty set, and the root of the other, called $v(\U)$, represents $\U$.
    \item Every set $S \in \Sets$ is a node, which we call $v(S)$, in the same tree of $v(p(S))$ (but there may be other tree nodes that are not associated with any sets).
    \item Let $I(S) = S \setminus p(S)$ and $D(S) = p(S) \setminus S$. Then the tree where $v(S)$ belongs has a chain of $|I(S)|+|D(S)|$ edges from $v(p(S))$ to $v(S)$, each labeled with a distinct element of $I(S)$ or $D(S)$, in arbitrary order. Those labels $x \in I(S)$ are written $+x$ and those $x \in D(S)$ are written $-x$.
\end{itemize} 
\end{definition}

It is clear that the two indel trees of a minimum-weight symdiff graph have $\DEntropy(\Sets)+2$ nodes in total. 
Again we represent either tree using the labeled tree data structures
\cite{HMZ14,HMZ16} to support labeled operations and path queries, and
this time the labels are from the set $[-u..u]$.
These two structures then use $2|\DEntropy(\Sets)|\lg u + O(|\DEntropy(\Sets)|) + o(|\DEntropy(\Sets)|)\cdot\lg u$
bits in total, which is within $O(\DEntropy(\Sets))$ words of space. 
We now show how the set queries are carried out using operations on labeled trees.

\begin{description}
\item[Membership] To find whether $x \in S$, we first locate $v(S)$. We then compute $u^+ = \mathtt{parent}_{+x}(v(S))$ and $u^- = \mathtt{parent}_{-x}(v(S))$ \cite{HMZ14}. If both exist, then $x \in S$ iff the depth of $u^+$ is larger than that of $u^-$ (i.e., the last edit operation involving $x$ was an insertion). If only $u^+$ exists, then $x \in S$. If only $u^-$ exists, then $x \not\in S$. If none exists, then $x \in S$ iff $v(S)$ descends from $v(\U)$. As for insertion compressibility, this process takes $O(\log\log_\omega|\U|)$ time.
\item[Access] We show in Section~\ref{sec:newop} how to solve this
  operation in time $O(\log|\U|)$ with $O(\DEntropy(\Sets))$ words of
  additional space.
\item[Rank] The rank of $x$ in $S$ is found by counting the number of labels of ancestors of $v(S)$ in the range $[1..x]$, minus the number of labels of ancestors of $v(S)$ in the range $[-x..-1]$. If $v(S)$ descends from $v(\U)$, we add $x$ to this difference, because the elements $[1..u]$ are tacitly assumed to exist at the root. This computation is done with two path counting queries, in $O(\log_{\omega} |\U|)$ time.
\item[Predecessor/Successor] This is solved exactly as for insertion compressibility. As they use the access operation, their time is $O(\log|\U|)$.
\end{description}

\begin{theorem} \label{thm:indels}
On a word RAM with $\omega$-bit words, a set of $s$ sets $\Sets$, over a universe of size $u=|\cup_{S \in \Sets} S|$, can be represented in $O(\DEntropy(\Sets))$ space so that membership queries can be performed in time $O(\log\log_\omega u)$, rank can be performed in time $O(\log_\omega u)$, and access, predecessor and successor in time $O(\log u)$. If $n=\sum_{S \in \Sets} |S|$, then, once a minimum-weight symdiff graph is constructed (see Theorem~\ref{thm:constr}), this structure can be built in $O(n\log u)$ extra time.
\end{theorem}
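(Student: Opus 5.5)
Theorem~\ref{thm:indels} collects the bounds already established operation by operation above, so the plan is to check each claimed bound and to deal with the one missing piece, access. The space bound follows because the two indel trees of a minimum-weight symdiff graph have $\DEntropy(\Sets)+2$ nodes. Representing each of them with the structures of \cite{HMZ14} and \cite{HMZ16}, over the label alphabet $[-u..u]$ of size $2u+1$ (so $\log$ factors in $2u+1$ are $O(\log u)$), costs $O(\DEntropy(\Sets))$ words. Storing a pointer from each $S$ to $v(S)$ and a bit telling which root it descends from adds $O(s)\subseteq O(\DEntropy(\Sets)+s)$ words; since distinct sets that equal neither root differ from their parent, each $v(S)$ sits at least one edge below its parent, so $s \le \DEntropy(\Sets)+O(1)$.

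For membership I would argue as written: the lowest ancestor edge mentioning $x$ (either $+x$ or $-x$) decides membership, because edits of $x$ along the root-to-$v(S)$ path must alternate consistently starting from the root's status. Two $\mathtt{parent}$ queries plus depth comparisons give $O(\log\log_\omega u)$. For rank, the count of $[1..x]$ minus the count of $[-x..-1]$, plus $x$ under $v(\U)$, equals $|S\cap[1..x]|$ by telescoping the insert/delete contributions along the path. This uses two path counting queries of $O(\log_\omega u)$ time.

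The main obstacle is access in $O(\log u)$ time with $O(\DEntropy(\Sets))$ extra words, deferred to Section~\ref{sec:newop}. My approach would be to descend a binary wavelet-style partition of $[1..u]$ as in \cite{HMZ16}. At each subuniverse $\U_{a,b}$, compute $|S\cap\U_{a,m}|$ by the rank formula restricted to $[a..m]$, with positive labels minus negative labels plus the tacit universe contribution. Then go left or right, updating $i$ accordingly. Each level needs a constant number of path-counting operations on the extracted trees, which are $O(1)$ per level when the counts are carried along by the $f_\T$-style mapping in the binary hierarchy. That gives $O(\log u)$ total.

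The subtle point is to apply tree extraction to positive and negative labels jointly, sorted by absolute value, so that $+x$ and $-x$ fall into the same subtree. With that arrangement a single descent handles both, and the space stays $O(\DEntropy(\Sets))$ words. Predecessor and successor then reduce to one rank plus one access, as in Section~\ref{sec:insertion}, for $O(\log u)$ time.

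For construction, given the symdiff graph, listing $I(S)$ and $D(S)$ for each edge takes $O(n)$ time after sorting, which costs $O(n\log u)$. Building the chains takes $O(\DEntropy(\Sets))$ time, and building the tree-extraction structures takes $O(\DEntropy(\Sets)\log u)\subseteq O(n\log u)$ time, since $\DEntropy(\Sets)\le\IEntropy(\Sets)\le n$.
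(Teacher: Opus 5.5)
Your treatment of space, membership, rank, predecessor/successor and construction matches the paper, and $m-a+1$ is the correct tacit contribution under $v(\U)$. The gap is in access, which is the only nontrivial part.

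You build a single binary hierarchy over absolute values, so that $+y$ and $-y$ land in the same subtree, and assert that ``a single descent handles both''. In that hierarchy, the $0/1$ label of a node of $\T_{a,b}$ records only whether $|y|\in[a..m]$. So $\mathtt{rank}_0(v)$ returns $s^{+}+s^{-}$, the number of ancestors carrying $+y$ \emph{or} $-y$ with $y\in[a..m]$. The quantity you need is $|S\cap[a..m]|=s^{+}-s^{-}$ (plus $m-a+1$ under $v(\U)$). The unsigned count determines only the parity of $|S\cap[a..m]|$, so the descent cannot decide which half holds the $i$th element.

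Your phrase about ``a constant number of path-counting operations \ldots\ carried along by the $f_\T$-style mapping'' is exactly where the sign information would have to come from. Nothing you describe supplies it in $O(1)$ time per level. A genuine path-counting query on signed labels costs $O(\log_\omega u)$, which gives only $O(\log u\log_\omega u)$ overall.

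The missing idea is to keep the signs apart while sharing the universe partition. The paper proceeds as follows:
\begin{itemize}
\item It extracts from $\T$ (labels $\{+,-\}$, via $f_\T$) a tree $\T^+$ of insertion nodes and a tree $\T^-$ of deletion nodes.
\item It builds the ordinary binary hierarchy on each.
\item It descends both in lockstep from $v^{+}=f_\T(v,+)$ and $v^{-}=f_\T(v,-)$. At each level $s^{\pm}=\mathtt{rank}_0(v^{\pm})$ costs $O(1)$ because the alphabet is binary. The branching decision uses $s^{+}-s^{-}$, and both nodes are mapped into the same child subuniverse.
\end{itemize}
So placing $+x$ and $-x$ in the same subtree is not what matters; what matters is having both signed counts available at the same subuniverse.

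Your joint hierarchy could also be repaired. For example, you could additionally label each $\T_{a,b}$ over the constant alphabet $\{0,1\}\times\{+,-\}$, keeping the $0/1$ labeling for the mapping. Then $\mathtt{rank}_{(0,+)}$ and $\mathtt{rank}_{(0,-)}$ take $O(1)$ time, still within $O(\DEntropy(\Sets))$ words. Either way, some such sign-aware structure has to be added and justified.
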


\subsection{Accessing with positive and negative values} \label{sec:newop}

\begin{figure}[t]
\centerline{\includegraphics[width=0.8\textwidth]{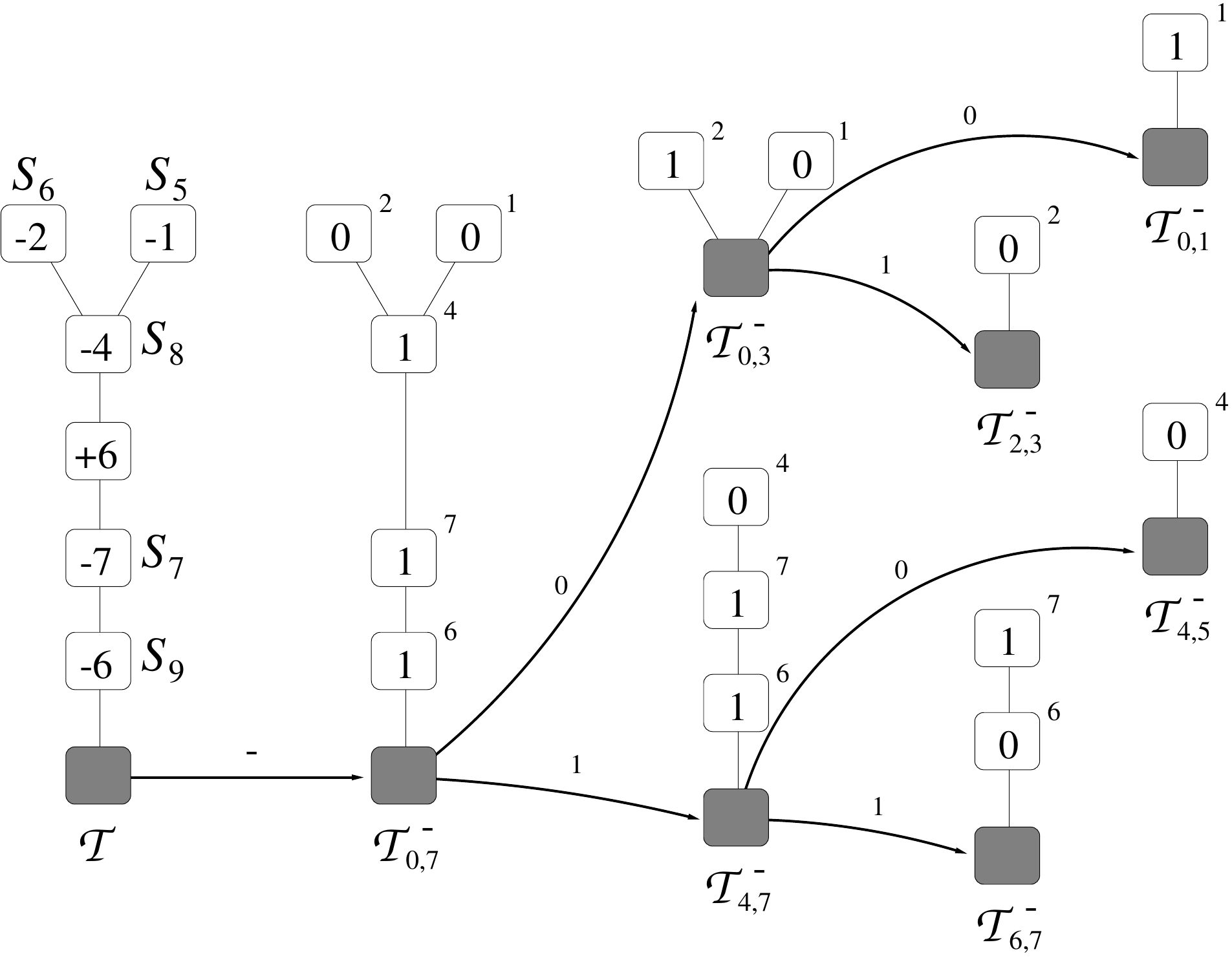}}
\caption{The hierarchical extraction process for $\T^- = \T^-_{0,7}$, starting from the (upside down) tree $\T$ rooted at $\U$ of Figure~\ref{fig:dentropy}. The rightward bold arrows lead from $\T^-_{a,b}$ to $\T^-_{a,m}$ (with label $0$) and to $\T^-_{m+1,b}$ (with label 1). We show the labels of the trees $\T^-_{a,b}$ inside the nodes, and the original symbols in small font near the boxes.}
\label{fig:wtree}
\end{figure}

We now show how to access the $i$th smallest element of a set $S$. 
Among the two indel trees representing $\Sets$, let $\T$ be the one containing the node $v=v(S)$. 
Let $x$ be the element of $S$ whose rank is $i$, that is, the element we are looking for.
Then, because negative values $-x$ can only occur if $x$ is already present in the set that an ancestor of $v$ represents, the rank $i$ of $x$ is the number of ancestors of $v$ with labels in $[1..x]$ minus the number of ancestors of $v$ within labels in $[-x..-1]$) if $\T$ is rooted at $v(\emptyset)$; otherwise, it is this difference plus $x$. 
Thus, it is easy to use rank operations from $v$ to binary search for $x$ in the indel tree in time $O(\log u\log_\omega u)$. We will, instead, exploit the structure of the hierarchy described in Section~\ref{sec:tree-extraction} to do the binary search in time $O(\log u)$.

To achieve this, more preprocessing is required. For either indel tree
$\T$, we will use the hierarchical binary universe partitioning
described in Section~\ref{sec:tree-extraction} to form two
hierarchies, one for positive values $[1..u]$ starting at a tree
$\T^+$ and another for negative values $[-u..{-1}]$ (seen as $[1..u]$)
starting at a tree $\T^-$. Both trees $\T^+$ and $\T^-$ are
constructed from $\T$ via tree extraction (with labels $\{+,-\}$),
using function $f_\T$. Figure~\ref{fig:wtree} illustrates the process.
Since, for either indel tree, we perform $O(\lg u)$ levels of
partitioning, and, for each level, we construct a 0/1-labeled tree
which occupies $O(|\DEntropy(\Sets)|)$ bits, the extra space cost
incurred here is $O(|\DEntropy(\Sets)|\lg u)$ bits, or
$O(|\DEntropy(\Sets)|)$ words.

The main idea of our query algorithm is to start with ranges $[a..b] := [1..u]$ and, as we descend down both hierarchical partitions, shrink the difference between $a$ and $b$, while maintaining the invariant that $i$ is between the rank of $a$ in $S$ and the rank of $b$ in $S$.
To describe this algorithm in detail, we first consider the case in which $\T$ is rooted at $v(\emptyset)$. Letting $v=v(S) \in \T$, we start from $v^+ = f_{\T}(v,+) \in \T_{1,u}^+$ and $v^- = f_{\T}(v,-) \in \T_{1,u}^-$.
In this process, $v^+ \in \T_{a,b}^+$ and $v^- \in \T_{a,b}^-$ will be the current nodes in both hierarchical partitions. We compute $s^+ = \mathtt{rank}_0(v^+)$ in $\T_{a,b}^+$ and $s^- = \mathtt{rank}_0(v^-)$ in $\T_{a,b}^-$. Note that $\mathtt{rank}_0$ is computed in constant time because the alphabet, $\{0,1\}$, of $\T_{a,b}^\pm$ is of size two. 
Observe that, in the path of $\T$ between $v(\emptyset)$ and $v$, the number of nodes representing the insertion of an element must be either equal to or exactly one more than the number of nodes representing its deletion. It then follows that $S$ has $s^+ - s^-$ elements in $[a..m]$, where $m=\lfloor(a+b)/2\rfloor$. Therefore, if $s^+ - s^- \ge i$, then $x \in [a..m]$. In this case, we set $b := m$, $v^+ := f_{a,b}^+(v^+,0) \in \T_{a,m}^+$, and $v^- := f_{a,b}^-(v^-,0) \in \T_{a,m}^-$, where $f_{x,y}^{\pm}$, the mapping function of the tree $\T_{x,y}^{\pm}$, is again computed in constant time because the trees have only two labels. Otherwise, we set $a := m+1$, $v^+ := f_{a,b}^+(v^+,1)$, $v^- := f_{a,b}^-(v^-,1)$, and $i := i - (s^+ - s^-)$. We iterate until we reach $a=b$, when we return the answer $x:=a=b$.

When the tree root is $\U$ instead of $\emptyset$, we modify the above procedure by using $s^+-s^- + (b-a+1)$ instead of $s^+-s^-$.
As we spend $O(1)$ time on each level of $\T^+$ and $\T^-$, which have $O(\log u)$ levels, this process uses $O(\log u)$ time.

\section{Applications and Previous Work} \label{sec:motivation}

There are various applications where it is natural to encode sets of sets by their symmetric differences. An example is inverted indices in natural language text collections, which store the sets of documents where each word occurs. The phenomenon that semantically correlated words tend to appear in about the same sets of documents has been observed long ago \cite{SL68}\cite[Ch.~3]{BYRN11}, and for example is used to detect word associations in NLP. Another example is the adjacency lists of web graphs and social networks, where the symmetric differences between such lists have been used to compress the graphs with high success \cite{BV04,BSV09,BRSV11}. This idea has been extended to arbitrary Boolean matrices (which in particular can be the adjacency matrices of graphs).

Elgohary et al.'s papers ~\cite{EBHRR18,EBHRR19} have contributed to a burst of research~\cite{ABBK20,AMB+25,AGN24,BB23,DCA22,FMGKNST22,MFMF23,TBBDFM25} on compressing matrices and manipulating them in compressed form.  The compression methods used by Elgohary et al.\ and the researchers that followed them usually treat the rows of the matrices as sequences, however --- possibly after re-ordering the columns and/or rows --- and thus have difficulty taking advantage of repetitions of a pair of values in two columns when they are separated by columns whose contents vary.

In contrast, over thirty years ago Bookstein and Klein~\cite{BK91} proposed compressing collections of bitvectors as collections of sets: we can express one set in terms of a similar one by recording their symmetric difference, and we can find the best way to express a collection this way by considering the complete graph whose vertices are the sets and whose edges are weighted by the cardinalities of the symmetric differences of their endpoints, and building a spanning forest of rooted trees that minimizes the sum of the cardinalities of the sets at the roots (which we store explicitly) and the total weights of the edges in the forest.

Bookstein and Klein's idea was reinvented several times~\cite{AM01,BBHKV98,BL01,BCCFM00}, but it seems Bj\"orklund and Lingas~\cite{BL01} were the first to observe that we can multiply matrices in time bounded in terms of the weights of their forests (the cardinalities of the sets at the roots and the total weights of the edges).  To see why, consider that if we have already multiplied a row $\vec{u}$ of binary matrix {\bf A} by a column $\vec{v}$ of binary matrix {\bf B} and we know the symmetric difference of $\vec{u}$ and another row $\vec{u}'$ of {\bf A}, viewed as sets, then we can compute the product of $\vec{u}'$ and $\vec{v}$ by multiplying the elements in that symmetric difference by the corresponding elements in $\vec{v}^\top$ and adjusting the product $\vec{u} \cdot \vec{v}^\top$ appropriately.  This observation easily generalizes also to non-binary matrices.

In the case of adjacency matrices of webgraphs with the rows sorted by URL, neighbouring rows are more likely to be similar.  Boldi and Vigna~\cite{BV04,BSV09,BRSV11} took advantage of this tendency and considered only edges between rows close in that ordering, in order to speed up the construction of spanning forests that usually still have low weight in practice for webgraphs.  They limited the possibilities of compression in order to provide a reasonable extraction time for the adjacency lists (i.e., the nonempty cells in a row).  Grabowski and Bieniecki~\cite{GB14} optimized their heuristic and Francisco et al.~\cite{FGKLN22} observed that the compression could be used to speed up matrix-vector multiplication.  They were all apparently aware of some earlier work but not Bookstein and Klein's nor Bj\"orklund and Lingas's papers in particular. 

In the case of coloured de Bruijn graphs (CDBGs) for bioinformatics, the sets of colours of neighbouring vertices are more likely to be similar.  Almodaresi et al.~\cite{APFJP20} took advantage of this tendency when compressing CDBGs for use in their tool Mantis~\cite{PABFJP18} by considering only edges between sets of colours associated with neighbouring vertices, and encoding the resulting color set on the minimum spanning tree of the induced graph.

Alanko et al.~\cite{ABGNP25} recently proposed compressing a collection of sets by finding, for each set $S$ in the collection, the smallest superset $S'$ of it in the collection and making $S'$ the parent of $S$ in a tree.  They path-compress this tree to make its height logarithmic in the size of the universe. 
They encode $S$ as a bitvector indicating which elements of $S'$ 
are in $S$, which allows them to support queries such as predecessor and successor on the sets in logarithmic time.  They tested this idea on CDBGs in the tool Themisto~\cite{AVMP23} with 16 thousand bacterial genomes and found it improved compression compared to Themisto's default (0.18 bits per element instead of 0.32 bits), although they did not report query times.

\section{Conclusion and Further Work}

We have introduced a measure $\DEntropy(\Sets)$ of the space needed to optimally represent a set $\Sets$ of sets by indicating which elements from each set differ from those of some other set. Such representation has been used already in various applications, which demonstrates its practical interest. Our contributions are (1) formalizing the measure; (2) giving an improved algorithm to compute $\DEntropy(\Sets)$ based on Prim's MST construction with edge costs computed incrementally; (3) designing an $O(\Delta(\Sets))$-space representation, based on tree extraction, that supports the most fundamental operations on the sets of $\Sets$ (membership, access, rank, predecessor, successor) in time at most $O(\log u)$, where $u$ is the universe size.

Our adaptation of Prim (2) is of independent interest: it assumes that (integer) edge costs $c$ can be computed incrementally: at any moment we know the cost is at least $c$ and, in $O(1)$ time, we can determine whether it is $c$ or greater. If the maximum cost of an MST edge is $\ell$, then the total cost of our Prim's variant is the sum, for every edge of cost $c$, of $\min(c,\ell)$. 

Another result of independent interest, in (3), is an $O(\log u)$-time extension of the known algorithm to find the $i$th smallest label among the ancestors of a labeled tree node. In the extension, some nodes can be marked as ``deleting'' a label that occurs upwards.
We leave for future work to achieve $O(\log_\omega u)$ time for this extension, which is the cost of the basic version with no deletion marks. This would make the times of the five basic queries within $O(\DEntropy(\Sets))$ space asymptotically the same as those we obtain within the larger space $O(\IEntropy(\Sets))$.

We also leave for future work to obtain succinct space, $\DEntropy(\Sets)\lg u + o(\DEntropy(\Sets)\lg u)$ bits, not just $O(\DEntropy(\Sets))$ words. The data structures of tree extraction \cite{HMZ14,HMZ16} we build on are indeed succinct, but we use several copies/versions of them. It is easy to collapse all those into one if we aim at $O(\log u)$ for all query times, but achieving succinctness without sacrificing query times is more challenging.

We can also use more refined notions of compressibility. For example, as it is evident from Figure~\ref{fig:ientropy}, one could reduce the number of nodes in the insertion tree by factoring paths; say, the paths for $S_6$ and $S_5/S_8$ could start with a shared part with the elements $3$ and $5$. This corresponds to adding to $\Sets$ an extra subset $\{ 3,5 \}$. It is possible to define a more refined form of $\IEntropy(\Sets)$ (and of $\DEntropy(\Sets)$) where one can optimally add subsets. While all our machinery to query the insertion-tree-based representation would work verbatim, it is not clear that the optimal sets to add can be found in polynomial time; approximations can be of interest.

On the other hand, it is possible to introduce more operations apart from insertions and deletions of elements, for example set complements, or taking the union or intersection of two sets. It is not clear, in this case, if one could efficiently support the query operations on those representations, apart from the possible hardness of building them optimally. 

\section*{Acknowledgements}

Funded in part by Basal Funds FB0001 and AFB240001, ANID, Chile.  TG
also supported by NSERC Discovery Grant RGPIN-07185-2020.  MH also
supported by NSERC Discovery Grant RGPIN-05953-2024. GN also supported by Fondecyt Grant 1-230755, ANID, Chile.
Thanks to Rob Patro for helpful discussions.

\end{document}